\documentclass[letterpaper, 10 pt, conference]{ieeeconf}

\usepackage{booktabs}
\usepackage[usenames,dvipsnames,svgnames,table]{xcolor}
\usepackage{amssymb,latexsym,amsfonts,amsmath,bbm}
\usepackage[cal=cm,bb=pazo]{mathalfa}
\usepackage{fancyhdr}
\usepackage{subcaption}
\usepackage{mathrsfs}
\usepackage{dsfont}
\usepackage{mathtools}
\usepackage{graphicx}
\usepackage{eso-pic}
\usepackage{epsfig}
\usepackage{mathtools}
\usepackage{tikz}
\usetikzlibrary{automata}
\usetikzlibrary{shapes}
\usetikzlibrary{calc,shapes,arrows}
\usepackage{mdwlist}
\usepackage{refcount}
\usepackage{comment}
\usepackage{ragged2e}
\usepackage[noadjust]{cite}

\usepackage[nocomma]{optidef}

\let\labelindent\IEEElabelindent
\usepackage{enumitem}

\makeatletter
\def\endthebibliography{%
	\def\@noitemerr{\@latex@warning{Empty `thebibliography' environment}}%
	\endlist
}
\makeatother

\usepackage{keyval,trig}
\usepackage{amssymb,dsfont}
\usepackage{mathtools,mathrsfs}

\definecolor{mediumblue}{rgb}{0.0, 0.0, 0.8}
\definecolor{mediumcandyapplered}{rgb}{0.89, 0.02, 0.17}
\definecolor{nazar}{rgb}{0.7, 0.5, 0.9}
\makeatletter
\let\NAT@parse\undefined
\makeatother
\usepackage[colorlinks=true, citecolor=mediumblue, linkcolor=mediumblue, urlcolor = mediumblue, final]{hyperref}

\DeclareRobustCommand\sampleline[1]{%
	\tikz\draw[#1] (0,0) (0,\the\dimexpr\fontdimen22\textfont2\relax)
	-- (2em,\the\dimexpr\fontdimen22\textfont2\relax);%
}

\DeclareRobustCommand\sampleline[1]{%
	\tikz\draw[#1] (0,0) (0,\the\dimexpr\fontdimen22\textfont2\relax)
	-- (1.5em,\the\dimexpr\fontdimen22\textfont2\relax);%
}

\DeclareFontFamily{U}{stix2bb}{}
\DeclareFontShape{U}{stix2bb}{m}{n} {<-> stix2-mathbb}{}

\NewDocumentCommand{\stixbbdigit}{m}{%
	\text{\usefont{U}{stix2bb}{m}{n}#1}%
}
\newcommand{\bbzero}{\stixbbdigit{0}}

\usepackage{algorithm}
\usepackage{algorithmic}

\usepackage{tikz}
\usetikzlibrary{calc,shapes,arrows}

\usepackage{xspace}

\newtheorem{theorem}{Theorem}[section]
\newtheorem{lemma}[theorem]{Lemma}

\newtheorem{definition}[theorem]{Definition}

\newtheorem{remark}[theorem]{Remark}

\newtheorem{assumption}{Assumption}

\usepackage[many]{tcolorbox}
\usetikzlibrary{calc}
\tcbuselibrary{skins}

\newtcolorbox{resp}[1][]{%
	enhanced jigsaw,%
	colback=gray!5!white,%
	colframe=gray!80!black,%
	size=small,%
	boxrule=1pt,%
	halign title=flush center,%
	coltitle=black,%
	breakable,%
	drop shadow=black!50!white,%
	attach boxed title to top left={xshift=1cm,yshift=-\tcboxedtitleheight/2,yshifttext=-\tcboxedtitleheight/2},%
	minipage boxed title=3cm,%
	boxed title style={%
		colback=white,%
		size=fbox,%
		boxrule=1pt,%
		boxsep=2pt,%
		underlay={%
			\coordinate (dotA) at ($(interior.west) + (-0.5pt,0)$);
			\coordinate (dotB) at ($(interior.east) + (0.5pt,0)$);
			\begin{scope}[gray!80!black]
				\fill (dotA) circle (2pt);
				\fill (dotB) circle (2pt);
			\end{scope}
		}%
	},%
	#1%
}

\DeclareRobustCommand{\legendsquare}[1]{%
	\textcolor{#1}{\rule{1.5ex}{1.5ex}}%
}

\definecolor{START}{rgb}{0.15,0.35,0.85}
\definecolor{TARGET}{rgb}{0.10,0.60,0.25}
\definecolor{OBSTACLES}{rgb}{0.55,0.08,0.08}

\IEEEoverridecommandlockouts

\makeatletter
\def\@opargbegintheorem#1#2#3{\textit{#1\ #2} \textit{(#3):}}
\makeatother

\newcommand{\R}{{\mathbb{R}}}
\newcommand{\Rpz}{{\mathbb{R}_{\geq 0}}}
\newcommand{\Rp}{{\mathbb{R}_{> 0}}}
\newcommand{\N}{{\mathbb{N}_{\geq 0}}}
\newcommand{\Np}{{\mathbb{N}_{\geq 1}}}
\newcommand{\I}{{\mathbb{I}}}
\newcommand{\One}{{\mathbb{1}}}

\title{From Noisy Data to Hierarchical Control: A Model-Order-Reduction Framework
\thanks{H. Sandberg was supported in part by the Swedish Research Council Grant 2023-04770 and Wallenberg AI, Autonomous Systems and Software Program (WASP DYNACON) funded by the Knut and Alice Wallenberg Foundation. K. H. Johansson was supported in part by Swedish Research Council Distinguished Professor Grant 2017-01078, Knut and Alice Wallenberg Foundation Wallenberg Scholar Grant, and Swedish Strategic Research Foundation FUSS SUCCESS Grant.}
\thanks{B. Samari and A. Lavaei are with the School of Computing, Newcastle University, NE4 5TG Newcastle upon Tyne, United Kingdom (e-mails: {\tt\small{\{b.samari2,abolfazl.lavaei\}@newcastle.ac.uk}}).}
\thanks{H. Sandberg and K. H. Johansson are with the Department of Decision and Control Systems, KTH Royal Institute of Technology, SE-100 44 Stockholm, Sweden. They are also affiliated with Digital Futures (e-mails: {\tt\small{\{hsan,kallej\}@kth.se}}).}
}

\author{Behrad Samari, Henrik Sandberg, Karl H. Johansson, and Abolfazl Lavaei
}

\allowdisplaybreaks
\begin{document}
\maketitle

\begin{abstract}
	This paper develops a direct data-driven framework for constructing reduced-order models (ROMs) of discrete-time linear dynamical systems with unknown dynamics and process disturbances. The proposed scheme enables controller synthesis on the ROM and its refinement to the original system via an interface function designed using noisy data. To achieve this, the notion of simulation functions (SFs) is employed to establish a formal relation between the original system and its ROM, yielding a quantitative bound on the mismatch between their output trajectories. To construct such relations and interface functions, we rely on data collected from the unknown system. In particular, using noise-corrupted input--state data gathered along a single trajectory of the system, and without identifying the original dynamics, we propose data-dependent conditions, cast as a semidefinite program, for the simultaneous construction of ROMs, SFs, and interface functions. Through a case study, we demonstrate that data-driven controller synthesis on the ROM, combined with controller refinement via the interface function, enables the satisfaction of complex logic specifications.
\end{abstract}

\section{Introduction}\phantomsection\label{Sec:IN}
Model order reduction (MOR) provides a systematic approach for constructing a low-dimensional model that preserves the essential and dominant characteristics of a high-dimensional dynamical system. Such reduced-order models (ROMs) facilitate system analysis and controller synthesis by reducing the computational burden associated with high-dimensional models, particularly in large-scale engineering applications~\cite{antoulas2005approximation}.

Despite these advantages, constructing reliable ROMs remains demanding, especially when the system is subject to process disturbances. The problem becomes even more challenging when an explicit model of the underlying system is unavailable, a scenario that is often unavoidable in practice. In this context, the literature adopts two main perspectives~\cite{dorfler2022bridging}: \emph{(i)} combining system identification with model-based approaches to obtain ROMs, known as indirect frameworks (see, \emph{e.g.},~\cite{ASAD2025TAC}), or \emph{(ii)} designing ROMs directly from data without an intermediate modeling step, known as direct approaches. Bypassing the system identification stage can offer significant advantages, particularly when the system cannot be uniquely modeled~\cite{van2020dataTAC}, which is typically the case when the unknown system is subject to process disturbances. However, establishing correctness guarantees becomes more challenging, since all conclusions should be drawn directly from data without relying on an explicit model.

\textbf{Related Literature and Core Contribution.} MOR techniques, whether employed in model-based or data-driven settings, can be broadly categorized into three main groups, with our approach falling under the third category: \emph{(i)} energy-based approaches, such as balanced truncation~\cite{moore1981principal,prajna2005model,sandberg2010extension,besselink2014model} and Hankel-norm methods~\cite{glover1984all,kawano2016model}; \emph{(ii)} Krylov-based methods, which rely on interpolation and/or moment matching~\cite{astolfi2010model,shakib2023time,moreschini2024closed,shakib2025dissipativity}; and \emph{(iii)} approaches grounded in the notion of simulation functions (SFs), in which a similarity relation between the original high-dimensional system and its corresponding ROM is established~\cite{girard2009hierarchical,zamani2017compositional,lavaei2019compositional,lavaei2020compositional}.

Several insightful data-driven MOR results have been developed within the first two categories. In the first category,~\cite{rapisarda2011identification} proposes a balanced truncation method based on persistently exciting data,~\cite{gosea2022data} introduces a reformulation of balanced truncation via the estimation of Gramian-related quantities, and~\cite{burohman2023data} presents a framework operating on noisy data under a known noise model. In the second category,~\cite{scarciotti2017data} proposes a data-driven MOR framework based on time-domain measurements. Using swapped interconnection,~\cite{mao2022data} develops an algorithm that asymptotically estimates an arbitrary number of moments from time-domain measurements. More recently,~\cite{bhattacharjee2025signal} constructs ROMs from input--output data that achieve moment matching even when both the system and the signal generator are unknown, and~\cite{scandella2025artificial} explores artificial neural networks for data-driven moment matching.

While promising, the aforementioned studies are primarily suited for stability and input--output behavior analysis. In such frameworks, the same input is often applied to both the original system and its ROM, meaning that their formulations preserve the input dimension and generally do not allow its reduction. Moreover, these approaches typically do not provide guaranteed bounds on the mismatch between the output trajectories of the original system and its ROM at every time step, which can limit their direct applicability to enforcing complex specifications, expressed using linear temporal logic (LTL) formulas~\cite{baier2008principles}, \emph{e.g.}, safety, reachability, and reach-while-avoid.
In contrast, approaches in the third category permit different inputs for the original system and its ROM while providing trajectory-wise bounds on their output mismatch at every time step. Specifically, a control input designed for the ROM is refined through an interface function into a control input for the original system, enabling the latter to satisfy the specification fulfilled by the ROM~\cite{lavaei2022automated}.

In the third category, to the best of our knowledge, this paper proposes the first direct data-driven method for constructing ROMs of discrete-time linear dynamical systems with unknown dynamics and process disturbances, while also establishing formal relations between the original system and its ROM using SFs and designing interface functions for controller refinement.
While the work in~\cite{11312476} also belongs to the third category, it focuses on continuous-time linear dynamical systems and does not account for process disturbances or noise in the collected data. In contrast, the present framework explicitly incorporates process disturbances, which naturally give rise to noise-corrupted data. More importantly, while~\cite{11312476} combines direct and indirect data-driven techniques (indirectly identifying the system matrices required for the interface function and certain equality conditions, while directly learning the SF from data), our proposed approach is fully direct and does not involve any system identification. Finally, our framework requires only a single set of noise-corrupted input--state data, whereas~\cite{11312476} relies on two datasets, one of which should be collected under zero input, resulting in more demanding data requirements. We also note that, while~\cite{samari2025dataARXIV} falls within the third category and considers noise-corrupted data, it is tailored to continuous-time systems and does not account for process disturbances, unlike the current framework. Accordingly, the proof steps and the main constraint in~\eqref{eq:SDP_con6} are substantially distinct from those in~\cite{samari2025dataARXIV}. Moreover, the closeness guarantee in~\eqref{eq:closeness} and the associated parameters in~\eqref{eq:rho_psi} are derived differently from those in~\cite{samari2025dataARXIV}.

\textbf{Notation.}
The identity matrix of dimension $n \times n$ is denoted by $\I_n$, $\One_n$ represents the $n$-dimensional vector whose entries are all equal to one, and $\bbzero_{n \times m}$ denotes the zero matrix of dimension $n \times m$.
Given vectors $x_i \in \R^n$, $i \in \{1, \ldots, N\}$, the matrix $X=[x_1 \,\, \ldots \,\, x_N]$ has dimension $n \times N$.
The Euclidean norm of a vector $x\in\R^{n}$ is denoted by $\vert x\vert$, whereas $\Vert P \Vert$ denotes the induced $2$-norm of a matrix $P$.
For a symmetric matrix $P$, the notation $P \succ 0$ ($P \succeq 0$) signifies that $P$ is positive (semi)definite, whereas $P \prec 0$ ($P \preceq 0$) denotes that $P$ is negative (semi)definite.
For a symmetric matrix $P$, the smallest and largest eigenvalues are represented by $\lambda_{\min}(P)$ and $\lambda_{\max}(P)$, respectively.
In a symmetric block matrix, the symbol $\star$ denotes the transpose of the corresponding off-diagonal block.
The rank of a matrix $A$ is denoted by $\operatorname{rank}(A)$.
For a given matrix $A$, $\mathrm{a} \in \operatorname{col}(A)$ indicates that the vector $\mathrm{a}$ belongs to the column space of $A$.
For a function $f : \N \to \R^n$, we define $\vert f \vert_{\infty} \coloneq \sup_{k \in \N} \vert f(k) \vert$.

\section{Problem Formulation}\phantomsection\label{Sec:PF}

\subsection{System and ROM Descriptions}\phantomsection\label{Subsec:SYS_ROM}
We consider a class of discrete-time linear control systems (dt-LCSs) subject to process disturbances, formally defined as follows.

\begin{definition}\phantomsection\label{def:system}
	A dt-LCS is specified by the tuple $\Sigma = (\mathds{X}, \mathds{U}, \mathds{Y}, \mathds{W}, A, B, \I_n)$, where $\mathds{X}=\R^n$, $\mathds{U}=\R^m$, $\mathds{Y}=\R^n$, and $\mathds{W}\subseteq\R^n$ denote the sets of states, inputs, outputs, and disturbances of the system, respectively. The matrices $A \in \R^{n \times n}$ and $B \in \R^{n \times m}$ represent the \emph{unknown} system and input matrices, respectively. We assume that all state variables of $\Sigma$ are directly measurable, as reflected by the output matrix $\I_n$.
	For a given initial state $x(0) = \mathrm{x} \in \mathds{X}$, input sequence $\nu : \N \to \mathds{U}$, and disturbance sequence $\varpi : \N \to \mathds{W}$, the state and output of $\Sigma$ evolve for all $k \in \N$ according to
	\begin{align}
		\Sigma : \begin{cases}
			\begin{array}{lll}
				\hspace{-0.2cm} x (k + 1) & \hspace{-0.3cm} = & \hspace{-0.25cm} A x(k) + B \nu(k) + \varpi (k),\\
				\hspace{-0.2cm} y (k)   & \hspace{-0.3cm} = & \hspace{-0.25cm} x(k).
			\end{array}
		\end{cases}\phantomsection \label{eq:dt-LCS}
	\end{align}
	For any initial state $\mathrm{x} \in \mathds{X}$, input sequence $\nu : \N \to \mathds{U}$, and disturbance sequence $\varpi : \N \to \mathds{W}$, the sequence $x_{\mathrm{x} \nu \varpi} : \N \to \mathds{X}$ denotes the resulting state trajectory of $\Sigma$, while $y_{\mathrm{x} \nu \varpi} = x_{\mathrm{x} \nu \varpi}$ denotes the associated output trajectory.
\end{definition}

The dt-LCS $\Sigma$ in Definition~\ref{def:system} is construed as a system with an unknown model since the matrices $A$ and $B$ are unknown. The following definition formally introduces the ROM associated with the dt-LCS $\Sigma$.

\begin{definition}\phantomsection\label{def:ROM}
	A ROM of the dt-LCS $\Sigma$ in Definition~\ref{def:system} is represented by the tuple $\hat \Sigma = (\hat{ \mathds{X}}, \hat{ \mathds{U}}, \hat{ \mathds{Y}}, \hat A, \hat B, \hat C)$, where $\hat{ \mathds{X}} \subset \R^{\hat n}$, $\hat{\mathds{U}} \subset \R^{\hat m}$, and $\hat{\mathds{Y}} \subset \R^{n}$ denote the compact state, input, and output sets of the ROM, respectively. The matrices $\hat A \in \R^{\hat n \times \hat n}$, $\hat B \in \R^{\hat n \times \hat m}$, and $\hat C \in \R^{n \times \hat n}$ correspond to the system, input, and output matrices that are to be constructed, with potentially $\hat n \ll n$.
	Given an initial state $\hat x(0) = \hat{\mathrm{x}} \in \hat{\mathds{X}}$ and an input sequence $\hat \nu : \N \to \hat{\mathds{U}}$, the state and output of $\hat \Sigma$ evolve for all $k \in \N$ according to
	\begin{align}
		\hat \Sigma : \begin{cases}
			\begin{array}{lll}
				\hspace{-0.2cm} \hat x (k + 1) & \hspace{-0.3cm} = & \hspace{-0.25cm} \hat A \hat x(k) + \hat B \hat \nu(k),\\
				\hspace{-0.2cm} \hat y (k)   & \hspace{-0.3cm} = & \hspace{-0.25cm} \hat C \hat x(k).
			\end{array}
		\end{cases}\phantomsection \label{eq:ROM}
	\end{align}
	For any initial condition $\hat{\mathrm{x}} \in \hat{\mathds{X}}$ and input sequence $\hat\nu : \N \to \hat{\mathds{U}}$ admissible\footnote{An input sequence is admissible from a given initial condition if the state sequence it generates according to~\eqref{eq:ROM} remains in $\hat{\mathds{X}}$ at all times. Throughout the paper, ROM input sequences are understood to be admissible from their specified initial conditions.} from $\hat{\mathrm{x}}$, the sequences $\hat{x}_{\hat{\mathrm{x}}\hat\nu} : \N \to \hat{\mathds{X}}$ and $\hat{y}_{\hat{\mathrm{x}}\hat\nu} : \N \to \hat{\mathds{Y}}$ denote the state and output trajectories of $\hat\Sigma$.
\end{definition}

With the definitions of the dt-LCS $\Sigma$ and its ROM $\hat{\Sigma}$ in place, the following subsection establishes a relation between them using the notion of SFs, which is subsequently utilized to derive a quantitative bound on the closeness of their output trajectories.

\subsection{Simulation Functions}\phantomsection\label{Subsec:SF}
SFs are defined on the Cartesian product of $\mathds{X}$ and $\hat{\mathds{X}}$ to measure the proximity between the output trajectories of $\Sigma$ and $\hat\Sigma$. Specifically, SFs can be used to guarantee that the mismatch between the output trajectories of the two systems remains within a prescribed error bound. We formally present this notion in the subsequent definition, adapted from the one in~\cite{1369393}.

\begin{definition}\phantomsection\label{def:SF}
	Given $\Sigma = (\mathds{X}, \mathds{U}, \mathds{Y}, \mathds{W}, A, B, \I_n)$ and its ROM $\hat \Sigma = (\hat{ \mathds{X}}, \hat{ \mathds{U}}, \hat{ \mathds{Y}}, \hat A, \hat B, \hat C)$, a function $\mathcal{S} : \mathds{X} \times \hat{ \mathds{X}} \to \Rpz$ is called an SF from $\hat \Sigma$ to $\Sigma$ if there exist constants $\alpha \in \Rp$, $\rho, \psi \in \Rpz$, and $0 < \kappa < 1$, such that
	\begin{subequations}\phantomsection\label{eq:SF}
		\begin{itemize}
			\item $\forall x \in \mathds{X}, \forall \hat x \in \hat{ \mathds{X}},$
			\begin{align}
				\alpha \vert x - \hat C \hat x \vert^2 \le \mathcal{S}(x,\hat x),\phantomsection\label{eq:SF1}
			\end{align}
			\item $\forall x\in \mathds{X}, \forall\hat x\in\hat{ \mathds{X}}, \forall \hat u \in \hat{ \mathds{U}}, \exists u \in \mathds{U}$ such that, for all $w \in \mathds{W}$, one has
			\begin{align}
				\mathcal{S}(x^+, \hat x^+) \leq \kappa \mathcal{S}(x,\hat x) + \rho \,\vert \hat u \vert^2 + \psi, \phantomsection\label{eq:SF2}
			\end{align}
			with $x^+ \coloneq Ax + Bu + w$ and $\hat x^+ \coloneq \hat A \hat x + \hat B \hat u$.\footnote{Throughout the paper, $\nu$, $\hat{\nu}$, and $\varpi$ represent input and disturbance sequences, whereas $u$, $\hat{u}$, and $w$ denote the corresponding generic pointwise values. Moreover, for every $\hat x\in\hat{\mathds{X}}$, any quantification over a pointwise ROM input $\hat u\in\hat{\mathds{U}}$ is understood to include only input values satisfying $\hat A\hat x+\hat B\hat u\in\hat{\mathds{X}}$.}
		\end{itemize}
	\end{subequations}
\end{definition}

We note that condition~\eqref{eq:SF2} requires the existence of a function $g_{k} : \mathds{X} \times \hat{\mathds{X}} \times \hat{\mathds{U}} \to \mathds{U}$ such that the generated input $u = g_{k}(x,\hat{x},\hat{u})$ fulfills this condition. Such a function, referred to as the interface function, establishes a link between the input sequences $\nu$ and $\hat{\nu}$, thereby enabling the refinement of a control input designed for $\hat{\Sigma}$ into an input applicable to $\Sigma$. In Section~\ref{Sec:DD}, we design such an interface function purely from data as one of the main contributions of the paper.
We also note that the notion of SFs introduced in Definition~\ref{def:SF} essentially conveys that when the output trajectories of $\Sigma$ and $\hat{\Sigma}$ originate from initial points that are adequately close (cf., condition~\eqref{eq:SF1}), they retain closeness as time evolves (cf., condition~\eqref{eq:SF2})~\cite{tabuada2009verification}.

The following theorem highlights the significance of the notion of SFs by formally quantifying the error between the output trajectories of $\Sigma$ and $\hat{\Sigma}$.

\begin{theorem}\phantomsection\label{thm:closeness}
	Given $\Sigma = (\mathds{X}, \mathds{U}, \mathds{Y}, \mathds{W}, A, B, \I_n)$ and its ROM $\hat \Sigma = (\hat{ \mathds{X}}, \hat{ \mathds{U}}, \hat{ \mathds{Y}}, \hat A, \hat B, \hat C)$, let $\mathcal{S}$ be an SF from $\hat{\Sigma}$ to $\Sigma$, with the parameters $\alpha$, $\rho$, $\psi$, and $\kappa$ as in Definition~\ref{def:SF}. Then, for any initial states $\mathrm{x} \in \mathds{X}$ and $\hat{ \mathrm{x}} \in \hat{\mathds{X}}$, any disturbance sequence $\varpi : \N \to \mathds{W}$, and any input sequence $\hat \nu : \N \to \hat{\mathds{U}}$ admissible from $\hat{\mathrm{x}}$, there exists an input sequence $\nu : \N \to \mathds{U}$ such that, for all $k \in \N$:
	\begin{align}
		\vert y_{\mathrm{x} \nu \varpi}(k) - \hat{ y}_{\hat{ \mathrm{x}} \hat \nu}(k) \vert \leq \sqrt{ \frac{ \mathcal{S}(\mathrm{x}, \hat{ \mathrm{x}})}{\alpha} + \frac{\rho \vert \hat \nu \vert_\infty^2 + \psi}{\alpha (1 - \kappa)}}. \phantomsection\label{eq:closeness}
	\end{align}
\end{theorem}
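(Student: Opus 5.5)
The plan is to build the input sequence $\nu$ recursively from condition~\eqref{eq:SF2}, unroll the resulting one-step contraction into a bound on $\mathcal{S}$ along the trajectories, and then convert that bound into an output bound using~\eqref{eq:SF1}.

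\textbf{Step 1: constructing $\nu$.} Fix $\mathrm{x}$, $\hat{\mathrm{x}}$, $\varpi$, and an admissible $\hat\nu$. Set $x(0)=\mathrm{x}$ and $\hat x(0)=\hat{\mathrm{x}}$. At each time $k$, suppose $x(k)\in\mathds{X}$ and $\hat x(k)\in\hat{\mathds{X}}$ are given. Admissibility gives $\hat A\hat x(k)+\hat B\hat\nu(k)\in\hat{\mathds{X}}$, so $\hat u=\hat\nu(k)$ is a legitimate quantified input in~\eqref{eq:SF2}. That condition then supplies some $u\in\mathds{U}$; define $\nu(k)\coloneq u$, i.e., $\nu(k)=g_k(x(k),\hat x(k),\hat\nu(k))$. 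The order of quantifiers matters here: $u$ is chosen before $w$, so the inequality holds for every $w\in\mathds{W}$, in particular for $w=\varpi(k)$. Propagating $x(k+1)=Ax(k)+B\nu(k)+\varpi(k)$ and $\hat x(k+1)=\hat A\hat x(k)+\hat B\hat\nu(k)$ yields, for all $k\in\N$,
\[
\mathcal{S}(x(k+1),\hat x(k+1))\le \kappa\,\mathcal{S}(x(k),\hat x(k))+\rho\,\vert\hat\nu(k)\vert^2+\psi .
\]
This inductive construction is the only delicate point. One must check that $\nu$ is well defined with no dependence on future disturbances. That is guaranteed by the "$\exists u\ \forall w$" structure: if $\nu(k)$ is allowed to depend on $\varpi(k)$ through $x(k+1)$, this is not an issue either, since the statement only asks for existence of a sequence $\nu$ for the given $\varpi$.

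\textbf{Step 2: unrolling the recursion.} Bound $\vert\hat\nu(k)\vert^2\le\vert\hat\nu\vert_\infty^2$. If $\vert\hat\nu\vert_\infty=\infty$, the claim is trivial, so assume it is finite. Induction on $k$ then gives
\[
\mathcal{S}(x(k),\hat x(k))\le \kappa^k\mathcal{S}(\mathrm{x},\hat{\mathrm{x}})+\bigl(\rho\vert\hat\nu\vert_\infty^2+\psi\bigr)\sum_{j=0}^{k-1}\kappa^j .
\]
Since $0<\kappa<1$, we have $\kappa^k\le1$ and $\sum_{j=0}^{k-1}\kappa^j\le 1/(1-\kappa)$. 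Hence
\[
\mathcal{S}(x(k),\hat x(k))\le \mathcal{S}(\mathrm{x},\hat{\mathrm{x}})+\frac{\rho\vert\hat\nu\vert_\infty^2+\psi}{1-\kappa}.
\]

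\textbf{Step 3: converting to the output bound.} Apply~\eqref{eq:SF1} with $y_{\mathrm{x}\nu\varpi}(k)=x(k)$ and $\hat y_{\hat{\mathrm{x}}\hat\nu}(k)=\hat C\hat x(k)$. This gives $\alpha\vert y_{\mathrm{x}\nu\varpi}(k)-\hat y_{\hat{\mathrm{x}}\hat\nu}(k)\vert^2\le\mathcal{S}(x(k),\hat x(k))$. Dividing by $\alpha>0$ and taking square roots yields~\eqref{eq:closeness} for all $k\in\N$.
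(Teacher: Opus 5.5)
Your proposal is correct and follows essentially the same route as the paper's proof: you select $\nu(k)$ stepwise from \eqref{eq:SF2}, which is valid for the realized $\varpi(k)$ because of the $\exists u\,\forall w$ order. You then unroll $V(k+1)\le\kappa V(k)+\rho\vert\hat\nu\vert_\infty^2+\psi$ using $\kappa^k\le 1$ and $\sum_{j=0}^{k-1}\kappa^j\le 1/(1-\kappa)$, and conclude via \eqref{eq:SF1} with $y=x$ and $\hat y=\hat C\hat x$.

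Two cosmetic points only. The case $\vert\hat\nu\vert_\infty=\infty$ cannot occur, since $\hat{\mathds{U}}$ is compact. Your aside about $\nu(k)$ depending on $\varpi(k)$ ``through $x(k+1)$'' is garbled: it is $\nu(k+1)$ that depends on $\varpi(k)$, which is harmless.
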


\begin{proof}
	We first establish the existence of an input sequence $\nu : \N \to \mathds{U}$ for the original system $\Sigma$ corresponding to the given input sequence $\hat{\nu} : \N \to \hat{\mathds{U}}$ of the ROM $\hat{\Sigma}$. This follows from condition~\eqref{eq:SF2}. In particular, at each time step $k \in \N$, given the current states $x(k)$ and $\hat{x}(k)$, as well as the input $\hat{\nu}(k)$, condition~\eqref{eq:SF2} ensures the existence of an input $\nu(k)$ satisfying the required inequality. By selecting $\nu(k)$ in this way at every time step, one obtains a well-defined input sequence $\nu : \N \to \mathds{U}$ along the evolution of the trajectories. This construction remains valid for the disturbance sequence $\varpi$, since condition~\eqref{eq:SF2} holds for all disturbances in $\mathds{W}$.
    To continue the proof, let $V(k)\coloneq \mathcal{S}(x(k),\hat x(k))$ for all $k \in \N$, for the sake of notational simplicity. According to condition~\eqref{eq:SF2}, one has
	\[
	V(k+1) = \mathcal{S}(x(k+1),\hat x(k+1))
	\leq \kappa V(k) \! + \! \rho \vert\hat \nu(k)\vert^2 \! + \! \psi. 
	\]
   Since for all $k \in \N$, one has $\vert\hat \nu(k)\vert \leq \vert\hat \nu\vert_{\infty}$, we get
	\begin{align}
		V(k+1) \leq \kappa V(k) + \rho \vert\hat \nu\vert_{\infty}^2 + \psi, \qquad \forall k \in \N.
		\phantomsection\label{eq:proof_uniform_recursion}
	\end{align}
	By iterating~\eqref{eq:proof_uniform_recursion}, for every $k \geq 1$, one gets
	\begin{align*}
		V(k)
		& \leq \kappa V(k-1) + \rho \vert\hat \nu\vert_{\infty}^2 + \psi\\
		& \leq \kappa \bigl(\kappa V(k-2) + \rho \vert\hat \nu\vert_{\infty}^2 + \psi\bigr) + \rho \vert\hat \nu\vert_{\infty}^2 + \psi\\
		& \ \vdots\\
		& \leq \kappa^k V(0) + \sum_{j=0}^{k-1} \kappa^{k-1-j}\bigl(\rho \vert\hat \nu\vert_{\infty}^2 + \psi\bigr)\\
		& = \kappa^k \mathcal{S}(\mathrm{x},\hat{\mathrm{x}})
		+ \bigl(\rho \vert\hat \nu\vert_{\infty}^2 + \psi\bigr)\sum_{j=0}^{k-1}\kappa^j.
	\end{align*}
	Since $0 < \kappa < 1$, one has
	$
	\sum_{j=0}^{k-1}\kappa^j = \frac{1-\kappa^k}{1-\kappa} \leq \frac{1}{1-\kappa}
	$, and $\kappa^k \leq 1$.
	Therefore, we get
	\begin{align}
		V(k) \leq  \mathcal{S}(\mathrm{x},\hat{\mathrm{x}}) + \frac{\rho \vert\hat \nu\vert_{\infty}^2 + \psi}{1-\kappa},
		\qquad \forall k \in \N.
		\phantomsection\label{eq:proof_bound_on_V}
	\end{align}
	On the other hand, by~\eqref{eq:SF1}, for all $k \in \N$, we have
	\begin{align}
		\alpha \vert x(k)-\hat C\hat x(k)\vert^2 \leq \mathcal{S}(x(k),\hat x(k)) = V(k).
		\phantomsection\label{eq:proof_output_bound}
	\end{align}
	Using the output equations in~\eqref{eq:dt-LCS} and~\eqref{eq:ROM}, namely, $y(k)=x(k)$ and $\hat y(k)=\hat C\hat x(k)$, inequality~\eqref{eq:proof_output_bound} becomes
	\[
	\alpha \vert y(k)-\hat y(k)\vert^2 \leq V(k).
	\]
	Combining this with~\eqref{eq:proof_bound_on_V} yields
	\[
	\alpha \vert y(k)-\hat y(k)\vert^2
	\leq \mathcal{S}(\mathrm{x},\hat{\mathrm{x}})
	+\frac{\rho \vert\hat \nu\vert_{\infty}^2 + \psi}{1-\kappa},
	\]
	and 
	therefore, one has
	\[
	\vert y(k)-\hat y(k)\vert
	\leq
	\sqrt{
		\frac{\mathcal{S}(\mathrm{x},\hat{\mathrm{x}})}{\alpha}
		+\frac{\rho \vert\hat \nu\vert_{\infty}^2 + \psi}{\alpha(1-\kappa)}
	},
	\qquad \forall k \in \N.
	\]
	Recall that $y(k)=y_{\mathrm{x}\nu\varpi}(k)$ and $\hat y(k)=\hat y_{\hat{\mathrm{x}}\hat \nu}(k)$, thereby concluding the proof.
\end{proof}

\begin{remark}\phantomsection\label{rem:difference}
The term $\psi$ in~\eqref{eq:SF2} allows one to explicitly account for the effect of noise in the collected data caused by the process disturbances acting on the dt-LCS $\Sigma$. However, large values of $\psi$ directly increase the certified upper bound in~\eqref{eq:closeness} on the mismatch between the output trajectories of $\Sigma$ and $\hat{\Sigma}$. Hence, reducing $\psi$ is an important design objective, as it potentially leads to a tighter closeness guarantee. Similarly, smaller values of $\rho$ and larger values of $\alpha$ may further tighten the bound in~\eqref{eq:closeness}. These objectives are considered in Theorem~\ref{thm:main}. In addition, enforcing $\mathcal{S}(\mathrm{x},\hat{\mathrm{x}})=0$ removes the first term in~\eqref{eq:closeness}, yielding a tighter closeness guarantee.
\end{remark}

While an SF yields a formal bound on the closeness between the output trajectories of $\Sigma$ and $\hat{\Sigma}$, its conventional construction requires exact knowledge of system matrices, as $A$ and $B$ explicitly appear in~\eqref{eq:SF2}. In this paper, however, these matrices are unknown, which constitutes the main challenge. The next section proposes a direct data-driven framework to address this problem.

\section{Data-Driven Methodology}\phantomsection\label{Sec:DD}

\subsection{Data Collection}\phantomsection\label{Subsec:data}
We first perform a finite-horizon experiment on the dt-LCS $\Sigma$ over the interval $[0,T]$, where $T \in \Np$ denotes the experiment horizon, and collect the input--state data
\begin{subequations}\phantomsection\label{eq:data}
	\begin{align}
		X & = \begin{bmatrix}
			x(0) & x(1) & \ldots & x(T - 1)
		\end{bmatrix} \in \R^{n \times T},\\
		U & = \begin{bmatrix}
			\nu(0) & \nu(1) & \ldots & \nu(T - 1)
		\end{bmatrix} \in \R^{m \times T},\\
		X_+ & = \begin{bmatrix}
			x(1) & x(2) & \ldots & x(T)
		\end{bmatrix} \in \R^{n \times T},\\
		W & = \begin{bmatrix}
			\varpi(0) & \varpi(1) & \ldots & \varpi(T - 1)
		\end{bmatrix} \in \R^{n \times T},
	\end{align}
\end{subequations}
where $W$ is unknown and cannot be measured directly. Since $W$ affects both $X$ and $X_+$ through the system dynamics, the collected data are noise-corrupted.
While $W$ is unknown, we impose the following assumption~\cite{9308978}, which indicates that the disturbance is bounded.

\begin{assumption}\phantomsection\label{assump:noise}
	There exists a known constant $\varepsilon \in \Rp$ such that, for all $w \in \mathds{W}$, $
	|w| \le \varepsilon.$
\end{assumption}

We remark that, in practical applications, a conservative yet reasonable value of $\varepsilon$ can be chosen based on physical insight into the source of the disturbance. Notice also that, under Assumption~\ref{assump:noise}, one has
\begin{align}
	W W^\top = \sum_{k = 0}^{T - 1} \varpi(k) \varpi^\top(k) \preceq \sum_{k = 0}^{T - 1} \vert\varpi(k)\vert^2 \I_n \preceq \Delta, \phantomsection\label{eq:WWT-bound}
\end{align}
where $\Delta \coloneq \varepsilon^2 T \I_n$.
We also impose the following assumption on the collected data in~\eqref{eq:data}, which facilitates the feasibility analysis of the proposed framework (cf., Section~\ref{Subsec:feasibility}).

\begin{assumption}\phantomsection\label{assump:rank}
	The collected data satisfy the rank condition
	\begin{align}
		\operatorname{rank} (H)=m+n, \phantomsection\label{eq:rank}
	\end{align}
	where $H = \begin{bmatrix}
		U^\top & X^\top
	\end{bmatrix}^{\! \top}$.
\end{assumption}

\begin{remark}\phantomsection\label{rem:assump2}
	Assumption~\ref{assump:rank}, which is closely related to the notion of persistency of excitation~\cite{willems2005note}, essentially serves as a data richness condition, ensuring that the collected data are sufficiently informative for the subsequent analysis. When the noise is sufficiently small, a sufficient condition for Assumption~\ref{assump:rank} to hold is that the input sequence is persistently exciting of order $n+1$ and that the pair $(A,B)$ is controllable~\cite{willems2005note}. Moreover, the rank condition~\eqref{eq:rank} implies that the experiment horizon should at least satisfy $T \geq m+n$.
\end{remark}

\subsection{Data-Driven ROM and SF Construction}\phantomsection\label{Subsec:result}
With the data collection procedure specified, we now turn to developing our direct data-driven framework. To construct an SF satisfying the conditions in Definition~\ref{def:SF}, we restrict attention to quadratic functions of the form
$
\mathcal{S}(x,\hat x)=(x-R\hat x)^\top P(x-R\hat x),
$
where \(P \succ 0\) and \(R \in \R^{n \times \hat n}\) denotes the \emph{reconstruction matrix}, which maps the ROM state $\hat x$ into the original state space, so that $R \hat x$ serves as the ROM-based approximation of $x$, and $x - R\hat x$ is the reconstruction error quantified by $\mathcal{S}$.

Given the quadratic structure adopted for the SF, the term $x^+ - R \hat x^+$ naturally arises in condition~\eqref{eq:SF2}. Motivated by this observation, the following lemma derives a suitable parameterization of this term and simultaneously presents the proposed structure of the interface function.

\begin{lemma}\phantomsection\label{lemma}
	Given $\Sigma = (\mathds{X}, \mathds{U}, \mathds{Y}, \mathds{W}, A, B, \I_n)$ and its ROM $\hat \Sigma = (\hat{ \mathds{X}}, \hat{ \mathds{U}}, \hat{ \mathds{Y}}, \hat A, \hat B, \hat C)$, let
	$
	S \coloneq \begin{bmatrix}
		B & A
	\end{bmatrix}
	$,
	and the interface function (a feedback law) be structured as 
	\begin{align}
		u = GP(x - R \hat x) + E \hat x + D \hat u, \phantomsection\label{eq:interface_map}
	\end{align}
	where \(G \in \R^{m \times n}\), \(P \in \R^{n \times n}\), \(R \in \R^{n \times \hat n}\), \(E \in \R^{m \times \hat n}\), and \(D \in \R^{m \times \hat m}\). Accordingly, one obtains the following parameterization:
	\begin{align}
		x^+ - R \hat x^+ & =  S \begin{bmatrix}
			GP\\
			\I_n
		\end{bmatrix} \! (x - R \hat x) + \Big(  S \begin{bmatrix}
		E\\
		R
		\end{bmatrix} - R \hat A  \Big) \hat x \notag\\
		& ~~~ + \Big(  S \begin{bmatrix}
		D\\
		\bbzero_{n \times \hat m}
		\end{bmatrix} - R \hat B  \Big) \hat u  + w. \phantomsection\label{eq:param}
	\end{align}
\end{lemma}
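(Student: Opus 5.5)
This is a direct algebraic identity, so I will substitute the dynamics and the interface law and regroup terms. Start from $x^+ = Ax + Bu + w$ and $\hat x^+ = \hat A\hat x + \hat B\hat u$, and write
\[
x^+ - R\hat x^+ = Ax + Bu + w - R\hat A\hat x - R\hat B\hat u .
\]
Next substitute the interface function~\eqref{eq:interface_map} for $u$:
\[
Bu = BGP(x-R\hat x) + BE\hat x + BD\hat u .
\]

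The key step is to rewrite $Ax$ so that the reconstruction error $x - R\hat x$ appears. I will add and subtract $AR\hat x$, giving $Ax = A(x - R\hat x) + AR\hat x$. Collecting the coefficients of each term yields
\[
x^+ - R\hat x^+ = (BGP + A)(x - R\hat x) + (BE + AR - R\hat A)\hat x + (BD - R\hat B)\hat u + w .
\]

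Finally, with $S = [B \;\; A]$, each coefficient is a block product:
\[
BGP + A = S\begin{bmatrix} GP\\ \I_n\end{bmatrix},\qquad
BE + AR = S\begin{bmatrix} E\\ R\end{bmatrix},\qquad
BD = S\begin{bmatrix} D\\ \bbzero_{n\times\hat m}\end{bmatrix}.
\]
Substituting these into the collected expression gives exactly~\eqref{eq:param}.

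There is no real obstacle. The only non-mechanical choice is the add-and-subtract of $AR\hat x$, which makes the error term $x - R\hat x$ explicit. Beyond that, the work is just checking that the block dimensions are consistent: $GP$ is $m\times n$ and $\I_n$ is $n\times n$, $E$ is $m\times\hat n$ and $R$ is $n\times\hat n$, and $D$ is $m\times\hat m$ stacked on $\bbzero_{n\times\hat m}$. In each case the stacked matrix has $m+n$ rows, matching the $n\times(m+n)$ matrix $S$.
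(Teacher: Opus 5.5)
Your proposal is correct and follows the paper's proof essentially step for step. You substitute the dynamics and the interface law~\eqref{eq:interface_map}, add and subtract $AR\hat x$ to expose $x - R\hat x$, collect the coefficients, and rewrite each one as a block product with $S$. The only addition is your explicit dimension check, which the paper leaves implicit.
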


\begin{proof}
	Given that $x^+ \coloneq Ax + Bu + w$, $\hat x^+ \coloneq \hat A \hat x + \hat B \hat u$, and $
	S \coloneq \begin{bmatrix}
		B & A
	\end{bmatrix}
	$, and considering the interface function in~\eqref{eq:interface_map}, one has
	\begin{align*}
		x^+ - R \hat x^+ & = Ax + Bu + w - R (\hat A \hat x + \hat B \hat u)\\
		& \! \overset{\eqref{eq:interface_map}}{=} Ax + BGP(x - R \hat x) + BE\hat x + BD\hat u + w\\ & ~~~ - R (\hat A \hat x + \hat B \hat u)\\
		& = A (x - R \hat x) + BGP(x - R \hat x) + AR\hat x + BE\hat x \\ & ~~~ + BD\hat u + w - R\hat A \hat x - R \hat B \hat u\\
		& = (A + BGP) (x - R \hat x) + (AR+BE-R\hat A)\hat x\\
		& ~~~ + (BD - R\hat B) \hat u + w\\
		& = \begin{bmatrix}
			B & \!\!\! A
		\end{bmatrix} \!\! \begin{bmatrix}
		GP \\
		\I_n
		\end{bmatrix} \!\! (x  -  R \hat x) + \Big( \! \begin{bmatrix}
		B & \!\!\! A
		\end{bmatrix} \!\! \begin{bmatrix}
		E \\
		R
		\end{bmatrix}\!\! -R\hat A  \Big)\hat x\\
		& ~~~ +  \Big( \! \begin{bmatrix}
			B & \!\!\! A
		\end{bmatrix} \!\! \begin{bmatrix}
			D \\
			\bbzero_{n \times \hat m}
		\end{bmatrix}\!\! -R\hat B  \Big)\hat u + w,
	\end{align*}
	where the third equality is obtained by addition and subtraction of the term $AR\hat x$, leading to~\eqref{eq:param} with $
	S \coloneq \begin{bmatrix}
		B & A
	\end{bmatrix}
	$, thereby concluding the proof.
\end{proof}

With the interface function in~\eqref{eq:interface_map} and the parameterization in~\eqref{eq:param} at hand, we now present the following theorem, which constitutes the main result of the paper and allows the interface function~\eqref{eq:interface_map} and the SF $\mathcal{S}(x,\hat x)$ to be constructed simultaneously from noise-corrupted data without identifying the system matrices.

\begin{theorem}\phantomsection\label{thm:main}
	Given $\Sigma = (\mathds{X}, \mathds{U}, \mathds{Y}, \mathds{W}, A, B, \I_n)$ and its ROM $\hat \Sigma = (\hat{ \mathds{X}}, \hat{ \mathds{U}}, \hat{ \mathds{Y}}, \hat A, \hat B, \hat C)$ with $\hat C = R$, where $R$ is to be constructed, let Assumptions~\ref{assump:noise} and~\ref{assump:rank} hold. Consider the following semidefinite program (SDP) in the decision variables $\beta \in \Rp$, $K_1 \in \R^{T \times \hat n}$, $K_2 \in \R^{T \times \hat m}$, $\bar{P} \in \R^{n \times n}$, with $\bar{P} \succ 0$, $G \in \R^{m \times n}$, and $\bar\mu \in \Rpz$, for some fixed $\eta, \, \mu_i \in \Rp$, $i \in \{1,\ldots, 6\}$, $0 < \kappa < 1$, $\hat A$, and $\hat B$:
	\begin{mini!}|s|[2]<b>
		{\substack{\bar{\mu}, K_1, K_2,\\ G, \bar{P}, \beta}}{\Vert K_1 \Vert  \! + \! \Vert K_2 \Vert \! + \! \Vert X_+ K_2 \! - \! X K_1 \hat B \Vert \! + \! \beta}
		{\label{eq:SDP}}{\label{eq:mini_SDP}}
		\addConstraint{	\One_T^{\! \top} \, K_1 \, \One_{\hat n} \geq \eta,}{ \label{eq:SDP_con1} }
		\addConstraint{\bar{P} \preceq \beta \I_n,}{ \label{eq:SDP_con2} }
		\addConstraint{X_+ K_1 = X K_1 \hat A,}{ \label{eq:SDP_con3} }
		\addConstraint{X K_2 = \bbzero_{n \times \hat m},}{ \label{eq:SDP_con5} }
		\addConstraint{
		\mathcal{Q}_1 - \bar{\mu} \mathcal{Q}_2 \succeq 0,
		}{ \label{eq:SDP_con6} }
	\end{mini!}
	where
	\begin{subequations}
		\begin{align}
			\mathcal{Q}_1 & \coloneq \begin{bmatrix}
				\kappa \bar{P} & \bbzero_{n \times (n + m)} & \bbzero_{n \times n}\\
				\star & \bbzero_{(n + m) \times (n + m)} & \begin{bmatrix}
					G\\
					\bar{P}
				\end{bmatrix}\\
				\star & \star & \frac{1}{1 + \sum_{i = 1}^{3} \mu_i} \bar{P}
			\end{bmatrix}\!\!,\phantomsection\label{eq:Q1}\\
			\mathcal{Q}_2 & \coloneq \begin{bmatrix}
				\Delta - X_+X_+^\top & X_+H^\top & \bbzero_{n \times n}\\
				\star & -HH^\top & \bbzero_{(n+m)\times n}\\
				\star & \star & \bbzero_{n \times n}
			\end{bmatrix}\!\!,\phantomsection\label{eq:Q2}
		\end{align}
	\end{subequations}
	with $H = \begin{bmatrix}
		U^\top & \!\!\! X^\top
	\end{bmatrix}^{\! \top}$. If the SDP~\eqref{eq:SDP} is feasible, then any feasible solution yields the interface function~\eqref{eq:interface_map}, with $P \coloneq \bar{P}^{-1}$, $R \coloneq XK_1$, $E \coloneq UK_1$, and $D \coloneq UK_2$. Moreover, $\mathcal{S}(x,\hat x)=(x-R\hat x)^\top P(x-R\hat x)$ is the corresponding SF from $\hat{\Sigma}$ to $\Sigma$, with $\alpha \coloneq \lambda_{\min}(P)$, and
	\begin{subequations}\phantomsection \label{eq:rho_psi}
		\begin{align}
			\rho &  \coloneq \! \big(1 + \frac{1}{\mu_2} + \frac{1}{\mu_4} + \mu_6\big) \lambda_{\max}(P) \mathcal{Z} ,\phantomsection\label{eq:rho}\\
			\psi &  \coloneq \! \big(1 + \frac{1}{\mu_1} + \mu_4 + \mu_5\big) \lambda_{\max}(P) \lambda_{\max}(\Delta) \Vert K_1 \Vert^2 \max_{\hat x \in \hat{ \mathds{X}}} \vert \hat x \vert^2\notag\\
			& ~~~ +  \big(1 + \frac{1}{\mu_3} + \frac{1}{\mu_5} + \frac{1}{\mu_6}\big) \lambda_{\max}(P) \varepsilon^2, \phantomsection\label{eq:psi}
		\end{align}
	\end{subequations}
	where $\mathcal{Z} \coloneq (\Vert X_+ K_2 - X K_1 \hat B \Vert + \sqrt{\lambda_{\max}(\Delta)} \Vert K_2 \Vert)^2$.
\end{theorem}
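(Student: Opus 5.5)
The plan is to verify the two conditions of Definition~\ref{def:SF} for $\mathcal{S}(x,\hat x)=(x-R\hat x)^\top P(x-R\hat x)$, with $P=\bar P^{-1}$ and $\hat C=R$. Write $e\coloneq x-R\hat x$.

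\textbf{Condition~\eqref{eq:SF1}.} This is immediate: $e^\top P e\ge\lambda_{\min}(P)\vert e\vert^2$, which gives $\alpha=\lambda_{\min}(P)>0$.

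\textbf{The data identity.} The substance is condition~\eqref{eq:SF2}, which I would attack through the parameterization~\eqref{eq:param} of Lemma~\ref{lemma}. The key fact is that stacking~\eqref{eq:dt-LCS} over the experiment gives $X_+=SH+W$, i.e.\ $SH=X_+-W$, with $S=[B\ A]$ unknown.

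\textbf{The $\hat x$-term.} Since $[E;R]=HK_1$, constraint~\eqref{eq:SDP_con3} yields
\[
S\begin{bmatrix}E\\R\end{bmatrix}-R\hat A=X_+K_1-WK_1-XK_1\hat A=-WK_1 .
\]
By~\eqref{eq:WWT-bound} its norm is at most $\sqrt{\lambda_{\max}(\Delta)}\Vert K_1\Vert$.

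\textbf{The $\hat u$-term.} Constraint~\eqref{eq:SDP_con5} gives $HK_2=[D;\bbzero]$, so
\[
S\begin{bmatrix}D\\ \bbzero\end{bmatrix}-R\hat B=X_+K_2-WK_2-XK_1\hat B .
\]
Its norm is at most $\Vert X_+K_2-XK_1\hat B\Vert+\sqrt{\lambda_{\max}(\Delta)}\Vert K_2\Vert$, whose square is exactly $\mathcal{Z}$. The unknown $W$ thus enters only through norm-bounded perturbations; this is why these two terms appear in $\psi$ and $\rho$.

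\textbf{Closed-loop matrix.} Next I would handle the term $A+BGP=S[G;\bar P]P\eqqcolon MP$. The true $S$ belongs to the set of matrices consistent with the data, $\{S:(X_+-SH)(X_+-SH)^\top\preceq\Delta\}$. Equivalently,
\[
[\I_n\ \ S]\,\mathcal{Q}_2^{11}\,[\I_n\ \ S]^\top\succeq0,
\]
where $\mathcal{Q}_2^{11}$ denotes the upper-left $2\times2$ block of~\eqref{eq:Q2}.

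Pre- and post-multiplying~\eqref{eq:SDP_con6} by the block matrix $\begin{bmatrix}\I_n&S&\bbzero\\ \bbzero&\bbzero&\I_n\end{bmatrix}$ and its transpose makes the $\bar\mu$-term nonpositive after the sign flip. This is the S-lemma direction, so no lossless version is needed. What remains is
\[
\begin{bmatrix}\kappa\bar P& M\\ M^\top&\tfrac{1}{1+\sum_{i=1}^3\mu_i}\bar P\end{bmatrix}\succeq0 .
\]
A Schur complement then gives $(1+\sum_{i=1}^3\mu_i)M\bar P^{-1}M^\top\preceq\kappa\bar P$. Getting the transposes and block order right so that this yields $(MP)^\top P(MP)\preceq\frac{\kappa}{1+\mu_1+\mu_2+\mu_3}P$ is the step I expect to be most delicate.

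If the direct manipulation gives $MPM^\top$ rather than $M^\top PM$, I would first try a duality/transpose argument, congruence with $P$. The fallback is the standard equivalence $\mathcal{A}^\top P\mathcal{A}\preceq cP\iff\mathcal{A}P^{-1}\mathcal{A}^\top\preceq cP^{-1}$, applied to $\mathcal{A}=MP$.

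\textbf{Assembling condition~\eqref{eq:SF2}.} Finally I would expand $\vert a+b+c+w\vert_P^2$ with
\[
a=MPe,\qquad b=-WK_1\hat x,\qquad c=(X_+K_2-WK_2-XK_1\hat B)\hat u .
\]
The six cross terms are bounded by Young's inequality $2p^\top Pq\le\mu p^\top Pp+\mu^{-1}q^\top Pq$. I would use $\mu_1,\mu_2,\mu_3$ on the pairs $(a,b),(a,c),(a,w)$, and $\mu_4,\mu_5,\mu_6$ on $(b,c),(b,w),(c,w)$, oriented so that the coefficients collect as in~\eqref{eq:rho_psi}.

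Then the $a$-coefficient $(1+\mu_1+\mu_2+\mu_3)$ cancels against the closed-loop bound, giving $\kappa\,\mathcal{S}(x,\hat x)$. The $\hat u$-terms are bounded by $\lambda_{\max}(P)\mathcal{Z}\vert\hat u\vert^2$, giving $\rho$. The $\hat x$- and $w$-terms are bounded using $\max_{\hat x\in\hat{\mathds X}}\vert\hat x\vert^2$ (finite by compactness) and Assumption~\ref{assump:noise}, giving $\psi$.

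The interface input $u$ is chosen by~\eqref{eq:interface_map} independently of $w$, so the bound holds for all $w\in\mathds W$. This completes the SF property. The objective function and constraints~\eqref{eq:SDP_con1}–\eqref{eq:SDP_con2} only serve to tighten the constants and to exclude the trivial solution $K_1=0$, so they are not needed for the proof.
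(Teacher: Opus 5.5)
Your proposal is correct and follows essentially the paper's route: the same substitution of the $\hat x$- and $\hat u$-coefficients via~\eqref{eq:SDP_con3},~\eqref{eq:SDP_con5} and $X_+=SH+W$, the same Young's-inequality pairing of $\mu_1,\ldots,\mu_6$, and the same S-procedure plus Schur-complement treatment of the closed-loop term $A+BGP$. The only difference is the order of the argument. You apply the congruence with $\begin{bmatrix}\I_n & S & \bbzero\\ \bbzero & \bbzero & \I_n\end{bmatrix}$ directly to~\eqref{eq:SDP_con6} and then need your fallback equivalence $\mathcal{A}^\top P\mathcal{A}\preceq cP\iff\mathcal{A}P^{-1}\mathcal{A}^\top\preceq cP^{-1}$; it is valid, and it is required, since your Schur complement yields $MPM^\top$. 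The paper instead works backwards from the desired bound on the closed-loop term and resolves the same transpose issue by permuting the $2\times 2$ block LMI and taking Schur complements with respect to each diagonal block.
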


\begin{proof}
	The proof is carried out in two steps. For any feasible solution of the SDP~\eqref{eq:SDP}, the first step illustrates that condition~\eqref{eq:SF1} is satisfied, whereas the second, and technically more involved, step establishes condition~\eqref{eq:SF2}. To continue with the first step, recall that $\hat C = R$, which yields
	$
	\vert x - \hat C \hat x\vert^2 = \vert x - R \hat x\vert^2.
	$
	Moreover, since
	\[
	\lambda_{\min}(P) \vert x - R \hat x\vert^2 \leq (x- R\hat x)^\top P (x- R\hat x) = \mathcal{S}(x,\hat{x}),
	\]
	it follows that~\eqref{eq:SF1} holds with $\alpha \coloneq \lambda_{\min}(P)$. This completes the first stage of the proof.
	
	To continue with the second step, considering $\mathcal{S}(x,\hat x)=(x-R\hat x)^\top P(x-R\hat x)$, one has
	\begin{align}
		\mathcal{S}(x^+, \hat x^+) & = (x^+ -R \hat x^+)^\top P (x^+ -R \hat x^+) \notag\\
		& \! \overset{\eqref{eq:param}}{=}  \Big(  S  \begin{bmatrix}
			G\\
			P^{-1}
		\end{bmatrix}  P  (x - R \hat x)  + \overbrace{\big (S \begin{bmatrix}
			E\\
			R
		\end{bmatrix} - R \hat A  \big)}^{\clubsuit } \hat x	 \notag\\
		& ~~~~~ + \overbrace{\big(  S \begin{bmatrix}
				D\\
				\bbzero_{n \times \hat m}
			\end{bmatrix} - R \hat B  \big)}^{\spadesuit } \hat u  + w\Big)^{\! \! \top} P \, (\ast),\phantomsection\label{tmp1}
	\end{align}
	where $(\ast)$ represents the expression inside the large parentheses in~\eqref{tmp1}.
	Observe that, according to the dt-LCS $\Sigma$ in~\eqref{eq:dt-LCS}, the collected data in~\eqref{eq:data} satisfy
	\begin{align}
		X_+ = A X + B U + W = S H + W, \phantomsection\label{tmp2}
	\end{align}
	where $
	S \coloneq \begin{bmatrix}
		B & \!\!\! A
	\end{bmatrix}
	$ and $H \coloneq \begin{bmatrix}
		U^\top & \!\!\! X^\top
	\end{bmatrix}^{\! \top}$.
	We first restrict our attention to the term ``$\clubsuit$'' in~\eqref{tmp1}. Inspired by the insightful work~\cite{mao2025oneTAC}, and recalling that $R \coloneq XK_1$ and $E \coloneq UK_1$, under constraint~\eqref{eq:SDP_con3}, one has
	\begin{align*}
		X_+ K_1 = X K_1 \hat A & \overset{\eqref{tmp2}}{\iff} SHK_1 + WK_1 = XK_1 \hat A\\
		& \iff S \begin{bmatrix}
			UK_1\\XK_1
		\end{bmatrix} + WK_1 = XK_1 \hat A\\
		& \iff S \begin{bmatrix}
			E\\R
		\end{bmatrix} + WK_1 = R \hat A.
	\end{align*}
	Hence, the term ``$\clubsuit$'' in~\eqref{tmp1} can be substituted by $-WK_1$.
	
	Focusing on the term ``$\spadesuit$'' in~\eqref{tmp1} and recalling that $D \coloneq U K_2$ and $R \coloneq XK_1$, under constraint~\eqref{eq:SDP_con5}, we obtain
	\begin{align}
		S \begin{bmatrix}
			D\\
			\bbzero_{n \times \hat m}
		\end{bmatrix} - R \hat B &  \overset{\eqref{eq:SDP_con5}}{=} S \overbrace{\begin{bmatrix}
			U\\
			X
		\end{bmatrix}}^H K_2 - R \hat B\notag\\
		& \, \overset{\eqref{tmp2}}{=} (X_+ - W)K_2 - XK_1 \hat B\notag\\
		& ~ = \overbrace{(X_+K_2 - XK_1\hat B)}^{\mathcal{N}_1} \overbrace{-WK_2}^{\mathcal{N}_2} .\phantomsection\label{eq:tmpn1}
	\end{align}
	Therefore, the term ``$\spadesuit$'' in~\eqref{tmp1} can be replaced by~\eqref{eq:tmpn1}.
	Incorporating these substitutions, one can rewrite~\eqref{tmp1} as
	\[
	\begin{aligned}
		& \mathcal{S}(x^+, \hat x^+)\\
		&=    \! \Big(   \overbrace{\!\!  S  \!   \begin{bmatrix}
				G\\
				P^{-1}
			\end{bmatrix}  \!\!  P  (x \!  - \!   R \hat x)}^{\mathcal{T}_1}   \overbrace{-   WK_1 \hat x}^{\mathcal{T}_2}  +\! \overbrace{(\mathcal{N}_1 \! + \! \mathcal{N}_2) \hat u}^{\mathcal{T}_3} \!  +\!     \overbrace{w}^{\mathcal{T}_4} \Big)^{ \!\!  \top} \! \! P (\ast)\\
		& = \sum_{i=1}^{4} \mathcal{T}_i^\top P \mathcal{T}_i + \sum_{j = 2}^{4} 2 \mathcal{T}_1^\top P \mathcal{T}_j + \sum_{z = 3}^{4} 2 \mathcal{T}_2^\top P \mathcal{T}_z + 2 \mathcal{T}_3^\top P \mathcal{T}_4.
	\end{aligned}
	\]
	Notice that for all $i,j\in\{1, \ldots, 4\}$, one can rewrite $2\mathcal{T}_i^\top P \mathcal{T}_j$ as $2 (\sqrt{P}\mathcal{T}_i)^\top (\sqrt{P} \mathcal{T}_j)$.
	Considering this, we next employ the Cauchy–Schwarz inequality, \emph{i.e.}, $a^\top b \le |a| |b|$ for any $a,b \in \R^{n}$, and subsequently apply Young’s inequality, \emph{i.e.}, $|a| |b| \le \tfrac{\mu_i}{2}|a|^2 + \tfrac{1}{2\mu_i}|b|^2$ for any $\mu_i \in \Rp$ for all $i \in \{1, \ldots, 6\}$, which yields
	\[
	\begin{aligned}
		& \sum_{j = 2}^{4} 2 \mathcal{T}_1^\top P \mathcal{T}_j  \leq \sum_{i=1}^{3}\mu_i \mathcal{T}_1^\top P \mathcal{T}_1 + \frac{1}{\mu_1} \mathcal{T}_2^\top P \mathcal{T}_2 + \frac{1}{\mu_2} \mathcal{T}_3^\top P \mathcal{T}_3 \\
		& \hspace{2.2cm} + \frac{1}{\mu_3} \mathcal{T}_4^\top P \mathcal{T}_4, \\
		&  \sum_{z = 3}^{4}  2 \mathcal{T}_2^\top P \mathcal{T}_z  \leq  \sum_{i=4}^{5}  \mu_i \mathcal{T}_2^\top P \mathcal{T}_2 + \frac{1}{\mu_4} \mathcal{T}_3^\top P \mathcal{T}_3 + \frac{1}{\mu_5} \mathcal{T}_4^\top P \mathcal{T}_4,\\
		& 2 \mathcal{T}_3^\top P \mathcal{T}_4 \leq \mu_6 \mathcal{T}_3^\top P \mathcal{T}_3  + \frac{1}{\mu_6} \mathcal{T}_4^\top P \mathcal{T}_4.
	\end{aligned}
	\]
	Consequently, we have
		\[
	\begin{aligned}
		& \mathcal{S}(x^+, \hat x^+)\\
		&\leq \big(1 + \sum_{i=1}^{3}\mu_i \big) \mathcal{T}_1^\top P \mathcal{T}_1 + \big( 1 + \frac{1}{\mu_1} + \mu_4 + \mu_5 \big) \mathcal{T}_2^\top P \mathcal{T}_2\\
		& ~ + \!\! \big( 1 \! + \! \frac{1}{\mu_2} \! + \! \frac{1}{\mu_4} \! +\! \mu_6 \big) \mathcal{T}_3^\top P \mathcal{T}_3 \! + \! \big( 1 \! +\! \frac{1}{\mu_3} \! +\! \frac{1}{\mu_5} \!+\! \frac{1}{\mu_6} \big) \mathcal{T}_4^\top P \mathcal{T}_4.
	\end{aligned}
	\]
	We now proceed to derive upper bounds for $\mathcal{T}_2^\top P \mathcal{T}_2$, $\mathcal{T}_3^\top P \mathcal{T}_3$, and $\mathcal{T}_4^\top P \mathcal{T}_4$, which eventually lead to $\rho$ and $\psi$ as specified in~\eqref{eq:rho_psi}.
	To this end, we have
	\[
	\begin{aligned}
		\mathcal{T}_2^\top P \mathcal{T}_2 & = \hat x^\top K_1^\top W^\top P W K_1 \hat x\\
		& \leq \lambda_{\max}(P) \hat x^\top K_1^\top W^\top W K_1 \hat x \\
		& \leq  \lambda_{\max}(P) \lambda_{\max}(W^\top W) \hat x^\top K_1^\top K_1 \hat x\\
		& = \lambda_{\max}(P) \lambda_{\max}(WW^\top) \hat x^\top K_1^\top K_1 \hat x\\
		& \! \overset{\eqref{eq:WWT-bound}}{\leq} \lambda_{\max}(P) \lambda_{\max}(\Delta) \Vert K_1 \Vert^2 \vert \hat x \vert^2\\
		& \leq \lambda_{\max}(P) \lambda_{\max}(\Delta) \Vert K_1 \Vert^2 \, \max_{\hat x \in \hat{ \mathds{X}}} \vert \hat x \vert^2.
	\end{aligned}
	\]
	Following the same rationale and recalling Assumption~\ref{assump:noise}, one has $\mathcal{T}_4^\top P \mathcal{T}_4 \leq  \lambda_{\max}(P) \varepsilon^2$. As for $\mathcal{T}_3^\top P \mathcal{T}_3$, we have
	\[
	\begin{aligned}
		\mathcal{T}_3^\top P \mathcal{T}_3 & = \hat{u}^\top (\mathcal{N}_1 + \mathcal{N}_2)^\top P (\mathcal{N}_1 + \mathcal{N}_2) \hat{u}\\
		& \leq \lambda_{\max}(P) \Vert \mathcal{N}_1 + \mathcal{N}_2 \Vert^2 \vert \hat{u} \vert^2 \\
		& \leq \lambda_{\max}(P) (\Vert \mathcal{N}_1 \Vert + \Vert \mathcal{N}_2 \Vert)^2 \vert \hat{u} \vert^2\\
		& \! \overset{\eqref{eq:tmpn1}}{\leq}  \lambda_{\max}(P) (\Vert \mathcal{N}_1 \Vert + \Vert W \Vert \Vert K_2 \Vert)^2 \vert \hat{u} \vert^2\\
		& \! \overset{\eqref{eq:WWT-bound}}{\leq}  \lambda_{\max}(P) \overbrace{(\Vert \mathcal{N}_1 \Vert + \sqrt{\lambda_{\max}(\Delta)} \Vert K_2 \Vert)^2}^{\mathcal{Z}} \vert \hat{u} \vert^2,
	\end{aligned}
	\]
	where, for the last inequality, we use
	\[
	\Vert W \Vert^2 = \lambda_{\max}(WW^\top) \overset{\eqref{eq:WWT-bound}}{\leq}  \lambda_{\max}(\Delta) \! \Rightarrow \! \Vert W \Vert \leq \sqrt{\lambda_{\max}(\Delta)}. 
	\]
	Accordingly, one obtains
	\[
	\begin{aligned}
		& \mathcal{S}(x^+, \hat x^+)\\
		&\leq \big(1 + \sum_{i=1}^{3}\mu_i \big) \mathcal{T}_1^\top P \mathcal{T}_1\\ & ~~~+ \big( 1 + \frac{1}{\mu_1} + \mu_4 + \mu_5 \big) \lambda_{\max}(P) \lambda_{\max}(\Delta) \Vert K_1 \Vert^2 \, \max_{\hat x \in \hat{ \mathds{X}}} \vert \hat x \vert^2\\
		& ~~~ + \big( 1   +   \frac{1}{\mu_2}   +   \frac{1}{\mu_4}   +  \mu_6 \big) \lambda_{\max}(P) \mathcal{Z} \vert \hat{u} \vert^2 \\  &~~~+   \big( 1   +  \frac{1}{\mu_3}   +  \frac{1}{\mu_5}  +  \frac{1}{\mu_6} \big) \lambda_{\max}(P) \varepsilon^2,
	\end{aligned}
	\]
	establishing $\rho$ and $\psi$ as in~\eqref{eq:rho_psi} (cf., condition~\eqref{eq:SF2}).
	
	To proceed with the proof, it remains to show that under constraint~\eqref{eq:SDP_con6}, the inequality
	\begin{align}
		\big(1 + \sum_{i=1}^{3}\mu_i \big) \mathcal{T}_1^\top P \mathcal{T}_1 \leq \kappa(x - R\hat x)^\top P (x - R \hat x) \phantomsection\label{TMP}
	\end{align}
	holds. To show this, recalling $P = P^\top \succ 0$, it suffices to establish that
	\[
	\kappa P - \big(1 + \sum_{i=1}^{3}\mu_i \big) P \begin{bmatrix}
		G\\
		P^{-1}
	\end{bmatrix}^{\!\!\top} S^\top P  S \begin{bmatrix}
	G\\
	P^{-1}
	\end{bmatrix} P \succeq 0.
	\]
	Pre-multiplying and post-multiplying this by $P^{-1}$, one gets
	\[
	\kappa P^{-1} - \big(1 + \sum_{i=1}^{3}\mu_i \big) \begin{bmatrix}
		G\\
		P^{-1}
	\end{bmatrix}^{\!\!\top} S^\top P  S \begin{bmatrix}
		G\\
		P^{-1}
	\end{bmatrix} \succeq 0.
	\]
	By the Schur complement, this expression is equivalent to
	\[
	\begin{aligned}
		& \begin{bmatrix}
			\frac{1}{1 + \sum_{i = 1}^{3} \mu_i} P^{-1}& S \begin{bmatrix}
				G\\
				P^{-1}
			\end{bmatrix}\\
			\star& \kappa P^{-1}
		\end{bmatrix} \succeq 0\\
		& \iff \begin{bmatrix}
			\kappa P^{-1}& \begin{bmatrix}
				G\\
				P^{-1}
			\end{bmatrix}^{\!\!\top} S^\top\\
			\star& \frac{1}{1 + \sum_{i = 1}^{3} \mu_i} P^{-1}
		\end{bmatrix} \succeq 0\\
		& \iff \kappa P^{-1} - \big(1 + \sum_{i=1}^{3}\mu_i \big) S \begin{bmatrix}
			G\\
			P^{-1}
		\end{bmatrix} P \begin{bmatrix}
			G\\
			P^{-1}
		\end{bmatrix}^{\!\!\top} S^\top  \succeq 0,
	\end{aligned}
	\]
	which can subsequently be written as
	\begin{align}
		\begin{bmatrix}
			\I_n\\S^\top
		\end{bmatrix}^{\! \! \top} \!\! \overbrace{\begin{bmatrix}
			\kappa P^{-1} & \bbzero_{n \times (n + m)}\\
			\star & -\big(1 \! + \! \sum_{i=1}^{3}\mu_i \big) \!\! \begin{bmatrix}
				G\\
				P^{-1}
			\end{bmatrix} \!\! P \! \begin{bmatrix}
				G\\
				P^{-1}
			\end{bmatrix}^{\!\!\top}
			\end{bmatrix}}^{\mathcal{M}_1} \!\! \! \begin{bmatrix}
		\I_n\\S^\top
		\end{bmatrix}\!\! \succeq 0. \phantomsection\label{tmp3}
	\end{align}
	Satisfying~\eqref{tmp3} presents two challenges: \emph{(i)} it depends on the unknown system matrices (since $S \coloneq \begin{bmatrix}
			B & \!\!\! A
		\end{bmatrix}$), and \emph{(ii)} its second diagonal block is nonlinear, which obstructs solving it.
	To overcome the first difficulty, we return to Assumption~\ref{assump:noise} together with the linear matrix inequality (LMI)~\eqref{eq:WWT-bound}. Specifically, the LMI~\eqref{eq:WWT-bound} can be rewritten as
	\[
	\begin{aligned}
		\Delta \succeq WW^\top & \overset{\eqref{tmp2}}{\iff}  \Delta \succeq (X_+ - SH) (X_+ - SH)^\top,
	\end{aligned}
	\]
	which yields
	\begin{align}
		\begin{bmatrix}
			\I_n\\S^\top
		\end{bmatrix}^{\! \! \top}  \overbrace{\begin{bmatrix}
			\Delta - X_+ X_+^\top &  X_+ H^\top\\
			\star & - H H^\top
			\end{bmatrix}}^{\mathcal{M}_2}  \begin{bmatrix}
			\I_n\\S^\top
		\end{bmatrix}\!\! \succeq 0. \phantomsection\label{tmp4}
	\end{align}
	Since both LMIs~\eqref{tmp3} and~\eqref{tmp4} are quadratic in $\begin{bmatrix} \I_n & \!\!\! S \end{bmatrix}^{\!\!\top}$, the classical S-procedure~\cite[Theorem~9]{9308978} can be employed to enforce~\eqref{tmp3} while accounting for~\eqref{tmp4}. In particular, this removes the need for explicit knowledge of $S$, and~\eqref{tmp3} can be guaranteed provided that there exists a multiplier $\bar{\mu} \in \Rpz$ such that
	\begin{align}
		\mathcal{M}_1 - \bar{\mu} \mathcal{M}_2 \succeq 0.\phantomsection\label{tmp5}
	\end{align}
	With the first challenge resolved, it remains to address the second one. For this purpose, recalling $\bar{P}=P^{-1}$, we can again apply the Schur complement to show that
	\[
	\eqref{eq:SDP_con6} \iff \eqref{tmp5}.
	\]
	Therefore, under constraint~\eqref{eq:SDP_con6}, condition~\eqref{TMP} is ensured. Combining~\eqref{TMP} with the preceding bound yields 
	\[
	\mathcal{S}(x^+,\hat x^+) \leq \kappa\mathcal{S}(x,\hat x)+\rho|\hat u|^2+\psi,
	\]
	which is condition~\eqref{eq:SF2}, with $\rho$ and $\psi$ as in~\eqref{eq:rho_psi}, thereby concluding the proof.
\end{proof}

\begin{remark}\phantomsection\label{rem:on_n}
	Smaller values of $\hat{n}$ typically facilitate controller synthesis for $\hat{\Sigma}$. The choice of $\hat{n}$ mainly depends on two aspects: \emph{(i)} the specification imposed on $\Sigma$ and \emph{(ii)} the feasibility of achieving a meaningful closeness guarantee. For instance, if the objective concerns only the first two state variables of the original system (\emph{e.g.}, positions along the $x$- and $y$-axes), selecting $\hat{n}=2$ is desirable, since it enables direct control. If such a reduction does not yield a favorable closeness guarantee, larger values of $\hat{n}$ should be considered progressively until a suitable $\hat{\Sigma}$ is obtained.
\end{remark}

\begin{remark}\phantomsection \label{Remark_ hatB}
	If $R$ and $\hat{A}$ were to be designed simultaneously, constraint~\eqref{eq:SDP_con3} would become bilinear due to their product. Such an issue is inherent to SF-based methods and also appears in model-based settings (cf., condition~(20a) in~\cite{zamani2017compositional}). To circumvent this issue, for a chosen $\hat{n}$ (cf., Remark~\ref{rem:on_n}), one can fix $\hat{A}$ and then solve the SDP~\eqref{eq:SDP}. In practice, selecting $\hat{A}$ is straightforward; choosing it as a Schur matrix simplifies the control design for $\hat{\Sigma}$. Moreover, since Theorem~\ref{thm:main} imposes no restriction on the structure of $\hat B$, a natural choice is $\hat{B} = \gamma \I_{\hat n}$ with $\gamma \in \R \backslash \{0\}$ (hence $\hat m = \hat n$). This renders $\hat{\Sigma}$ fully actuated, which facilitates controller synthesis. Additionally, through its effect on the term $\Vert X_+ K_2 - X K_1 \hat B \Vert$ in~\eqref{eq:rho}, the parameter $\gamma$ can be used to regulate $\rho$, and thus the closeness guarantee in~\eqref{eq:closeness}.
\end{remark}

\section{Discussion}\phantomsection\label{Sec:Disc}

\subsection{Feasibility Analysis of SDP~\eqref{eq:SDP}}\phantomsection\label{Subsec:feasibility}
We first note that constraints~\eqref{eq:SDP_con1} and~\eqref{eq:SDP_con2} are not intrinsic to the proposed framework. In particular, since $K_1 = \bbzero_{T \times \hat n}$ is a trivial yet undesirable solution of~\eqref{eq:SDP_con3}, constraint~\eqref{eq:SDP_con1} is introduced solely to exclude this case. This constraint is not unique and can be replaced by other constraints (\emph{e.g.}, a favorable constraint on $XK_1$).
Likewise, constraint~\eqref{eq:SDP_con2} is introduced, and $\beta \in \Rp$ is included in the cost function~\eqref{eq:mini_SDP} to promote a potentially large value of $\alpha \coloneq \lambda_{\min}(P)$, which may tighten the closeness guarantee~\eqref{eq:closeness}.
Similarly, the inclusion of $\Vert K_1 \Vert$, $\Vert K_2 \Vert$, and $\Vert X_+ K_2 - X K_1 \hat B \Vert$ in the cost function~\eqref{eq:mini_SDP} promotes solutions with smaller values of these norms, which may reduce the values of $\psi$ and $\rho$ in~\eqref{eq:rho_psi}, thereby further tightening the closeness guarantee~\eqref{eq:closeness}. Notice also that the strict constraint $\bar P \succ 0$ is enforced as $\bar P \succeq \delta \I_n$ for a fixed $\delta \in \Rp$.

We further note that constraint~\eqref{eq:SDP_con3} alone is a homogeneous system of linear equations and is therefore always feasible, since it admits the trivial solution $K_1=\bbzero_{T \times \hat n}$. Moreover, if $T>n$, which follows from Assumption~\ref{assump:rank} and Remark~\ref{rem:assump2} when $m \geq 1$, then~\eqref{eq:SDP_con3} has more unknowns than scalar equalities, namely, $T\hat n$ unknowns and $n\hat n$ scalar equalities; hence, it admits a nontrivial solution. Under the same condition $T>n$, constraint~\eqref{eq:SDP_con5} is also a homogeneous system of linear equations and therefore is always feasible while also admitting a nontrivial solution.

We also remark that Assumption~\ref{assump:rank} facilitates the feasibility of condition~\eqref{eq:SDP_con6}. In fact, Assumption~\ref{assump:rank} implies that $H$ has full row rank, and hence $HH^\top \succ 0$. Consequently, whenever $\bar{\mu} \in \Rp$, $\bar{\mu}HH^\top \succ 0$. If Assumption~\ref{assump:rank} fails, then $HH^\top$ becomes singular, and feasibility of~\eqref{eq:SDP_con6}, although still possible, requires the off-diagonal blocks coupled with $\bar{\mu}HH^\top$ to satisfy additional compatibility conditions; in particular, their columns should belong to $\operatorname{col}(HH^\top)$. Therefore, Assumption~\ref{assump:rank} removes this structural difficulty whenever $\bar{\mu}\in \Rp$, thereby facilitating the satisfaction of~\eqref{eq:SDP_con6}. Remark that the above discussions concern individual constraints and do not imply their joint feasibility. We also note that using the S-procedure to certify the constructed SF and interface function for all system matrices consistent with the collected data and the assumed disturbance bound may introduce conservatism, potentially affecting both the feasibility of the SDP and the tightness of the resulting closeness bound.

\subsection{Scalability Analysis}
We focus here on constraint~\eqref{eq:SDP_con6}, which is a central LMI in the proposed formulation; this constraint has dimension $(3n+m)\times(3n+m)$. Assuming that $T$, $m$, $\hat n$, and $\hat m$ grow at most linearly with $n$, the remaining constraints and norm terms do not change the following complexity orders. According to~\cite{8619019}, the resulting SDP has approximate time and memory complexities of $\mathcal{O}(n^{6.5})$ and $\mathcal{O}(n^{4})$, respectively, which can be computationally demanding. We note that, while energy-based and Krylov-based methods generally offer better scalability, the present framework provides an explicit interface-based controller-refinement mechanism and a trajectory-wise closeness certificate suited to complex LTL specifications (cf. the case study), features not generally provided by those methods.

\subsection{Limitations}
Similar to any methodology, the proposed framework has certain limitations, which can also be viewed as directions for future research. First, the current setting is specifically tailored to linear dynamical systems. We note that, for a particular class of nonlinear dynamical systems, \cite{samari2025dataARXIV} proposes a data-driven approach for constructing SFs and interface functions. However, the approach in~\cite{samari2025dataARXIV} is developed for the continuous-time setting and, more importantly, does not directly account for process disturbances. This suggests that extending the present framework to handle certain classes of nonlinear systems subject to process disturbances within the discrete-time setting would require further theoretical development. Second, the proposed data-driven framework requires input--state data and assumes that all state variables are directly measurable. Inspired by the recent work in~\cite{li2026controller}, one potential direction is to extend the current framework to settings in which the state vector is partially measurable. This direction is challenging in our MOR setting and requires technical development, as the current SDP relies on the full-state data matrices $X$ and $X_+$ to form the reconstruction matrix $R$ (cf.~\eqref{eq:SDP_con3}), the SF $\mathcal{S}$, and the components of the interface function in~\eqref{eq:interface_map}. Moreover, the interface function itself explicitly depends on the state $x$, which, along with $X$ and $X_+$, is not directly available under partial state measurements.

\section{Simulation Results}\phantomsection\label{Sec:simul}
This section demonstrates the effectiveness of the proposed data-driven framework. In particular, for a six-dimensional dt-LCS subject to process disturbances, whose first two state variables are assumed to represent its position along the $x$- and $y$-axes, we aim to construct a two-dimensional ROM, design a controller for the ROM using SCOTS~\cite{rungger2016scots} to satisfy a challenging reach-while-avoid specification, and subsequently refine the synthesized controller back to the original system, thereby enabling the dt-LCS to satisfy the desired specification.
It is worth noting that for the original dt-LCS, the direct design of such a controller using tools commonly employed in the community of formal methods, such as SCOTS, is remarkably challenging, if not infeasible. In contrast, employing the proposed MOR framework significantly facilitates this design task. Note that although the framework is capable of handling substantially higher-dimensional systems (see, \emph{e.g.}, the first case study in~\cite{samari2025dataARXIV}), we consider a six-dimensional system here for ease of presentation.
We performed all simulations using \textsc{Matlab} \textit{R2023b} on a MacBook Pro (Apple M2 Max, 32\,GB memory).

We consider a dt-LCS as in~\eqref{eq:dt-LCS}, where
\[
A = \begin{bmatrix}
	0.82 & 0.10 & 0 & 0 & 0 & 0\\
	0 & 0.78 & 0.12 & 0 & 0 & 0\\
	0 & 0 & 0.75 & 0.10 & 0 & 0\\
	0 & 0 & 0 & 0.72 & 0.08 & 0\\
	0 & 0 & 0 & 0 & 0.70 & 0.10\\
	0.05 & 0 & 0 & 0 & 0 & 0.68
\end{bmatrix} \!\!,
\]
and
\[
B = \begin{bmatrix}
	0.68 & 0.34 & 0.17 & 0 & 0 & 0\\
	0 & 0 & 0 & 0.34 & 0.68 & 0.34
\end{bmatrix}^{\!\top}\!\!\!\!,
\]
with both matrices being unknown. Moreover, the disturbance is generated as $\varpi(k)=\xi(k)\, [0~0~1~0~0~1]^\top$, where $\xi(k)$ is sampled independently and uniformly from $[-10^{-3},10^{-3}]$ at each time step. Consequently, Assumption~\ref{assump:noise} holds with the known but conservative bound $\varepsilon= 1.5 \times 10^{-3}$. Accordingly, we conduct an experiment on the dt-LCS for data collection with a horizon of $T = 20$, during which the components of $\nu(k)$ are sampled independently at each time step from a normal distribution with mean $-15$ and standard deviation $30$. Consequently, one can compute $\Delta = 4.5 \times 10^{-5} \I_6$ according to~\eqref{eq:WWT-bound}.

To proceed with the design, following Remark~\ref{Remark_ hatB}, we fix
\[
\hat A =
	(1 - 10^{-6})  \,\I_2, \quad \hat B = 0.0001 \I_2.
\]
Moreover, we set $\kappa = 0.81$, $\mu_1 = 0.1540$, $\mu_2 = 0.0480$, $\mu_3 = 0.0479$, $\mu_4 = 0.3118$, $\mu_5 = 0.3111$, and $\mu_6 = 0.9977$.
In this case study, following the discussion in Section~\ref{Subsec:feasibility}, we replace constraint~\eqref{eq:SDP_con1} by fixing the leading $2 \times 2$ block of $X K_1$ to $0.5 \I_2$, while setting $\delta = 1$. Solving the resulting SDP takes approximately $0.35$ seconds and yields $\Vert K_1 \Vert = 0.0847$, $\Vert K_2 \Vert = 6.127 \times 10^{-6}$, $\Vert X_+ K_2 - X K_1 \hat B \Vert = 1.6701 \times 10^{-4}$, $\beta = 1.2072$, $\bar{\mu} = 22.7640$, and
\[
	\begin{aligned}
		GP =
		\left[
		\begin{array}{cccc}
			-0.9358 & -0.5370 & -0.2482 & -0.0174 \\
			-0.002 & -0.0164 & -0.0443 & -0.3713
		\end{array}
		\right.
		\\
		\left.
		\begin{array}{cc}
			0.0013 & -0.0111 \\
			-0.7364 & -0.4193
		\end{array}
		\right]\!\!.
	\end{aligned}
	\]
Furthermore, we subsequently compute
\[
	\begin{aligned}
		R & \! = \!\! \begin{bmatrix}
			0.5 & 0 & -0.375 & \! -1.1626 & -1.5361 & \! -0.5551\\ 0 & 0.5 & 1.125 & \! 2.9377 & 3.9219 & \!1.5899
		\end{bmatrix}^{\!\!\top}\\
		E & \! = \!\! \begin{bmatrix}
			0.1324 &  -0.0735\\ -0.5960  &  1.4964
		\end{bmatrix}\!\!, ~\! D \! = \! 10^{-3} \! \times \!  \begin{bmatrix}
			0.0296 &   0.0525\\ -0.2618  &  0.5921
		\end{bmatrix}\!\!,
	\end{aligned}
	\]
and form the interface function as in~\eqref{eq:interface_map}. We also obtain $\alpha = 0.8284$ and $\lambda_{\max}(P) = 1$. Accordingly, one can compute $\rho = 7.2623 \times 10^{-7}$. Now, considering $\hat{\mathds{X}} = [-6,6]^2$ and $\hat{\mathds{U}} = [-6,6]^2$, we compute
$\max_{\hat{x}\in\hat{\mathds{X}}} |\hat{x}|^2 = 72$ and $|\hat{\nu}|_\infty^2 \leq 72$,
yielding $\psi = 2.4735 \times 10^{-4}$ and an error bound of $0.0436$ from~\eqref{eq:closeness} with $\mathcal{S}(\mathrm{x},\hat{\mathrm{x}})=0$. The corresponding simulation results are depicted in Fig.~\ref{fig:example}. As illustrated, the complex \emph{reach-while-avoid} specification, which requires the system trajectories to start from the initial set \legendsquare{START}, reach the target set \legendsquare{TARGET}, and avoid collisions with the obstacles \legendsquare{OBSTACLES}, is satisfied across all $10$ simulation runs.

\begin{figure}[t!]
	\centering
	\begin{subfigure}[b]{0.49\columnwidth}
		\centering
		\includegraphics[width=\linewidth]{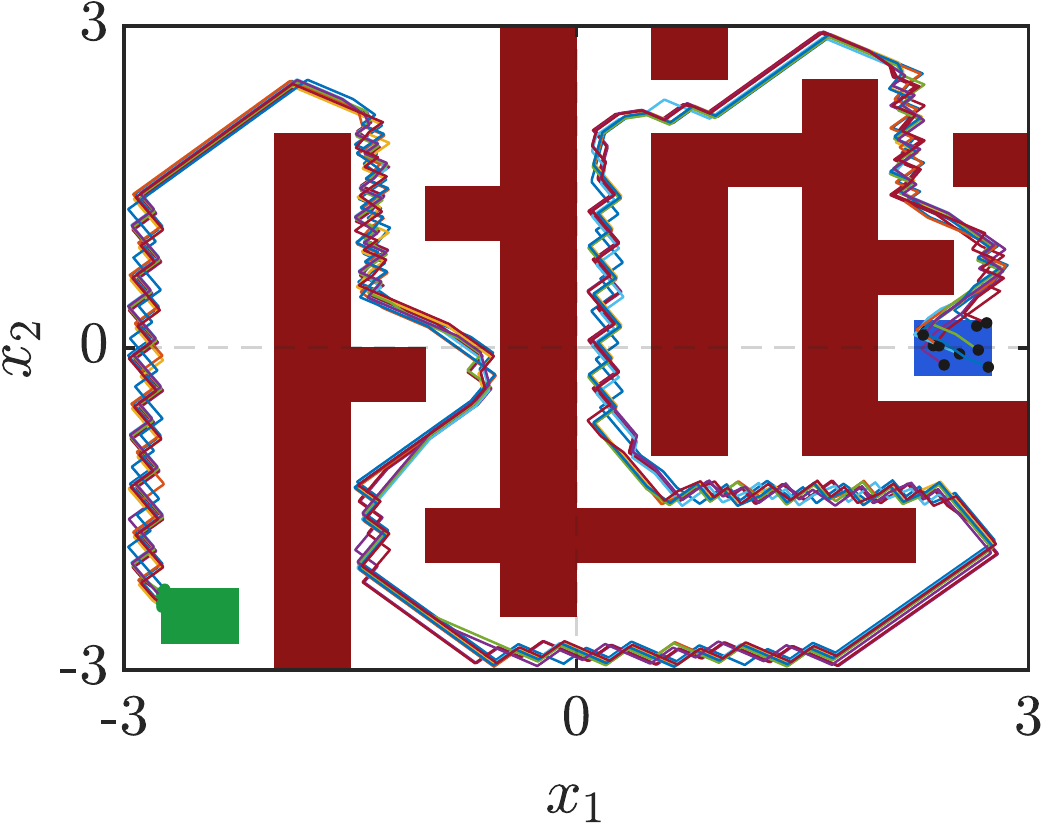}
		\caption{ }
		\label{fig:subfig2_1}
	\end{subfigure}
	\hfill
	\begin{subfigure}[b]{0.49\columnwidth}
		\centering
		\includegraphics[width=\linewidth]{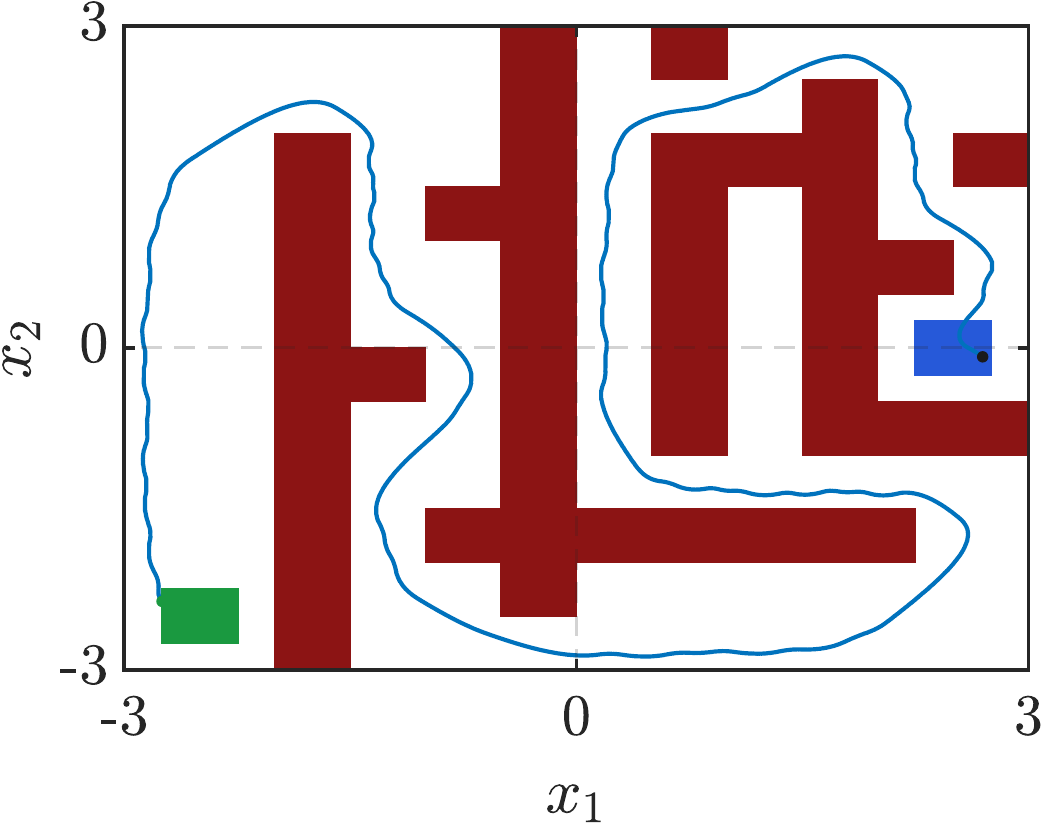}
		\caption{ }
		\label{fig:subfig2_2}
	\end{subfigure}
	
	\caption{(a) Trajectories of the dt-LCS for $10$ simulation runs with different initial conditions, all satisfying the complex specification.
		(b) A representative trajectory after applying a smoothing filter to mitigate the zigzag behavior observed in (a), which arises from the SCOTS synthesis tool due to the chosen discretization parameter.
		A video of this simulation is available at \url{https://youtu.be/iakgP6JJ3Bo}.
	}
	\label{fig:example}
\end{figure}

\section{Conclusion}
We proposed a direct data-driven framework for constructing ROMs of discrete-time linear dynamical systems with unknown dynamics and process disturbances. We derived data-dependent conditions for the simultaneous construction of ROMs, SFs, interface functions, and quantitative closeness guarantees directly from noise-corrupted input--state data collected along a single trajectory of the system. Future research will focus on extending the proposed framework to nonlinear systems and expanding it to incorporate input--output data.

\bibliographystyle{ieeetr}
\bibliography{biblio}

\end{document}